\theoremstyle{remark}
\newtheorem{theorem}{Theorem}
\newtheorem{Proposition}{Proposition}
\DeclareMathAlphabet\mathbfcal{OMS}{cmsy}{b}{n}
\newcounter{mytempeqncnt}
\begin{document}

\title{Multi-Pair Two Way AF Full-Duplex Massive MIMO Relaying with ZFR/ZFT Processing\\[-1ex]}
\author{
Ekant Sharma, Rohit Budhiraja and K Vasudevan\\
Department of Electrical Engineering, IIT Kanpur, India\\
email: \{ekant, rohitbr, vasu\}@iitk.ac.in
       \\[-4ex]}
\maketitle
\begin{abstract}
We consider two-way amplify and forward  relaying, where multiple full-duplex user pairs  exchange information via a shared full-duplex massive multiple-input multiple-output (MIMO) relay. We derive closed-form lower bound for the spectral efficiency with zero-forcing processing at the relay, by using minimum mean squared error channel estimation. The zero-forcing lower bound for the system model considered herein, which is valid for arbitrary number of antennas, is not yet derived in the massive MIMO relaying literature. We numerically demonstrate the accuracy of the derived lower bound and the performance improvement achieved using zero-forcing processing. We also numerically demonstrate the spectral gains achieved by a full-duplex system over a half-duplex one for various antenna regimes.
\end{abstract}
\iftoggle{SINGLE_COL}{\vspace*{-0.25in}}{}
\begin{IEEEkeywords}\iftoggle{SINGLE_COL}{\vspace*{-0.15in}}{}
Full-duplex, relay, spectral efficiency.
\end{IEEEkeywords}
\IEEEpeerreviewmaketitle
\iftoggle{SINGLE_COL}{}{\vspace*{-0.1in}}
\section{Introduction}\iftoggle{SINGLE_COL}{}{\vspace*{-0.05in}}
Relay based communication is being extensively investigated to expand the coverage, improve the diversity, increase the data rate and reduce the power consumption of wireless communication systems \cite{DBLP:journals/corr/abs-1303-2817,DBLP:conf/iswcs/ZhangTH13a}. The current generation relays are mostly half-duplex due to their implementation simplicity. 
Full-duplex technology is becoming  popular after recent studies, e.g., \cite{DBLP:journals/corr/NadhSSAG16, fd_tut_ref_ashu}, demonstrated a significant reduction in the loop interference, caused due to transmission and reception on the same channel. A full-duplex relay~\cite{fd_relay_si_can_young}, commonly known as full-duplex one-way relay, transmits and receives on the same channel, and can theoretically double the spectral efficiency, when compared with a half-duplex one-way relay \cite{DBLP:journals/corr/abs-1303-2817,DBLP:conf/iswcs/ZhangTH13a}. 

Full-duplex two-way relaying \cite{DBLP:journals/twc/ZhangMDXK16}, wherein two users exchange two data units in one channel use via a relay, further improves the spectral efficiency. Two-way full-duplex relaying is recently extended to multi-pair two -way full-duplex relaying \cite{DBLP:journals/jsac/NgoSML14,DBLP:journals/jsac/ZhangCSX16,Zhang2016Chen} wherein multiple user pairs exchange data via a shared relay in a single channel use. A multi-pair two-way full-duplex relay system has following interference sources: i) co-channel (inter-pair) interference due to multiple users simultaneously accessing the channel; ii) loop interference at the relay and at the users; and iii) inter-user interference caused due to simultaneous transmission and reception by full-duplex~nodes. 

Massive multiple-input multiple-output (MIMO) systems have become popular as they cancel co-channel interference by using simple linear transmit processing schemes e.g., zero-forcing transmission (ZFT) and  maximal-ratio transmission (MRT)  \cite{DBLP:journals/twc/Marzetta10,DBLP:journals/tcom/NgoLM13}, and  significantly improve the spectral efficiency.
Massive MIMO technology is also being incorporated in multi-pair full-duplex relays to cancel the loop interference at the relay, and inter-pair co-channel interference \cite{DBLP:journals/jsac/NgoSML14,DBLP:journals/jsac/ZhangCSX16,Zhang2016Chen,DBLP:journals/twc/DaiD16,mm_relay_hong}. Reference~\cite{DBLP:journals/jsac/NgoSML14} derived the achievable rate and a power allocation scheme to maximize the ergodic sum-rate for one-way decode and forward full-duplex massive MIMO relaying.  Zhang~\textit{et~al.} in \cite{DBLP:journals/jsac/ZhangCSX16} proposed four power scaling schemes for two-way full-duplex massive MIMO relaying to improve its spectral and energy efficiency. Reference~\cite{Zhang2016Chen} developed a power allocation scheme to maximize the sum-rate for multi-pair two-way full-duplex massive MIMO amplify-and-forward (AF)  relaying by using maximal-ratio combining (MRC)/MRT processing at the relay, and by using least squares (LS) channel estimation.
Dai~\textit{et~al.} in \cite{DBLP:journals/twc/DaiD16} considered a half-duplex multi-pair two-way massive MIMO  AF relay and derived closed-form achievable rate expressions and a power allocation scheme to maximize the sum-rate with imperfect channel state information (CSI). The authors in \cite{mm_relay_hong} developed power scaling schemes for half-duplex massive MIMO one-way relay systems.

The authors in \cite{Zhang2016Chen} have derived the spectral-efficiency lower bound for MRC/MRT relay processing with LS channel estimation. We extend the work done in \cite{Zhang2016Chen}, and next list the main contributions of this paper.
\begin{itemize}
\item We derive closed-form lower bound for the spectral efficiency of the multi-pair two-way AF full-duplex massive MIMO relay for arbitrary number of relay antennas. We consider zero-forcing reception (ZFR)/zero-forcing transmission (ZFT) processing at the relay and minimum-mean-square-error (MMSE) relay channel estimation. We note that the bound obtained for MRC/MRT processing based on LS channel estimation in \cite{Zhang2016Chen} cannot be trivially extended to the ZFR/ZFT processing with MMSE channel estimation, considered herein. This closed-form spectral-efficiency lower bound, with arbitrary number of relay antennas, to the best of our knowledge, have not yet been derived in the massive MIMO relaying literature.
\item We also numerically demonstrate the considerable spectral efficiency gains achieved due to MMSE channel estimation and ZFR/ZFT processing. 
\end{itemize}


\iftoggle{SINGLE_COL}{}{\vspace*{-0.05in}}
\section{System Model}
\label{sys_model}\iftoggle{SINGLE_COL}{}{\vspace*{-0.06in}}
We consider multi-pair two-way AF full-duplex relaying as shown in Fig.~\ref{system_model}, where $K$ full-duplex user pairs communicate via a single full-duplex relay on the same time-frequency resource. We assume that the user $S_{2m-1}$ for $m=1$ to $K$ on one side of the relay, wants to send as well as receive from the user $S_{2m}$ that is on the other side of the relay. We also assume that there is no direct link between the user-pairs $(S_{2m-1},S_{2m})$ on the either side of the relay due to large path loss and heavy shadowing. Also the relay has $N$ transmit and $N$ receive antennas, while each user has one transmit and one receive antenna. The users on either side of the relay, due to full-duplex architecture, interfere with each other; the interference caused is termed as inter-user interference.

\begin{figure}[!htb]
	\centering
	\includegraphics[scale=\iftoggle{SINGLE_COL}{0.35}{0.35}]{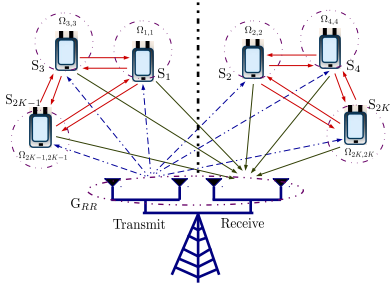}\iftoggle{SINGLE_COL}{}{\vspace*{-8pt}}
	\caption{Multi-pair two-way full-duplex AF massive MIMO relay system.
}
	\label{system_model}	
\end{figure}\vspace*{-4pt}

At time instant $n$, each user $S_{k}$, $k =1$ to $2K$, transmits the signal $\sqrt{p_{k}}x_{k}(n)$ to the relay, and simultaneously the relay broadcasts $\mathbf{x}_{R}(n)\in\mathbb{C}^{N\times 1}$ to all users. Here $p_k$ denotes the transmit power of the $k$th user. The received signal at the relay and the user $S_{k}$ are given as\iftoggle{SINGLE_COL}{}{\vspace*{-0.02in}}
\begin{align}
\mathbf{y}_{R}(n) &= \sum_{k=1}^{2K}\sqrt{p_{k}} \mathbf{g}_{k}x_{k}(n) +\mathbf{G_{RR}}{\mathbf{x}}_{R}(n)+\mathbf{z}_{R}(n)\label{yrn}\iftoggle{SINGLE_COL}{}{\nonumber\\
&}= \tilde{\mathbf{G}}\mathbf{x}(n) +\mathbf{G_{RR}}{\mathbf{x}}_{R}(n)+\mathbf{z}_{R}(n),\\
y_{k}(n)&= \mathbf{f}_{k}^{T}\mathbf{x}_{R}(n)+\sum_{i,k\in U_{k}}\Omega_{k,i}\sqrt{p_{i}}x_{i}(n)+z_{k}(n)\label{ykn}.
\end{align}
We denote the matrix $\mathbf{G} = \left[\mathbf{g}_{1},\,\mathbf{g}_{2},\,\mathbf{g}_{3},\,\cdots,\, \mathbf{g}_{2K}\right]\in\mathbb{C}^{N\times 2K}$ and the matrix (to be used later in the sequel) $\mathbf{F}=\left[\mathbf{f}_{1},\,\mathbf{f}_{2},\,\mathbf{f}_{3},\,\cdots,\, \mathbf{f}_{2K}\right]\in\mathbb{C}^{N\times 2K}$, where $\mathbf{g}_k$ and $\mathbf{f}_k$ denote the channels from the transmit antenna of the $k$th user to the relay receive antenna array, and from the relay's  transmit antenna array to the receive antenna of the $k$th user, respectively. Further, $\tilde{\mathbf{G}} = \mathbf{G}\mathbf{P}$ where $\mathbf{P}=\mbox{diag}\left\{\sqrt{p_{1}},\,\sqrt{p_{2}},\cdots\,\sqrt{p_{2K}}\right\}$ with $0\leq p_{i}\leq P_{0}$ and $\sum_{i=1}^{2K}p_{i} = P$. The receive signal at the relay and the users are interfered by their own transmit signal, which is called as the self-loop interference. Here $\mathbf{G}_{RR}$ and ${\Omega}_{k,k}$ denote the self-loop interference at the relay and the user $S_{k}$. The entries of the matrix  $\mathbf{G}_{RR}$ and the scalar ${\Omega}_{k,k}$ are independent and identically distributed (i.i.d.) with distribution $\mathcal{C}\mathcal{N}(0,\sigma_{LIR}^{2})$ and $\mathcal{C}\mathcal{N}(0,\sigma_{k,k}^2)$, respectively. The term $\Omega_{k,i}\, (k,i\in U_{k},i\neq k)$ denote the inter-user interference channel, which is modeled as i.i.d. $\mathcal{C}\mathcal{N}(0,\sigma_{k,i}^{2})$, where the set $U_{k}=\left[1,3,5,\cdots,2K-1\right]$ for odd $k$ and $U_{k}=\left[2,4,6,\cdots,2K\right]$ for even $k$. The vector $\mathbf{x}(n)= \left[x_{1}(n),\,x_{2}(n),\,x_{3}(n),\,\cdots,\,x_{2K}(n)\right]^{T}\in\mathbb{C}^{N\times 1}$ with $\mathbb{E}\left[\mathbf{x}(n)\mathbf{x}^{H}(n)\right]= \mathbf{I}_{2K}$. The vector $\mathbf{z}_{R}(n)\in\mathbb{C}^{N\times 1}$ and the scalar ${z}_{k}(n)$ are additive white Gaussian noise (AWGN) at the relay and the user $S_{k}$. The elements of $\mathbf{z}_{R}$ and the scalar ${z}_{k}(n)$ are modeled  as i.i.d. $\mathcal{C}\mathcal{N}(0,\sigma_{nr}^2)$ and $\mathcal{C}\mathcal{N}(0,\sigma_{n}^2)$, respectively.

The channel matrices account for both small-scale and large-scale fading; we therefore express $\mathbf{G} = \mathbf{H}_{u}\mathbf{D}_{u}^{1/2}$ and $\mathbf{F} = \mathbf{H}_{d}\mathbf{D}_{d}^{1/2}$. Here the small-scale fading matrices $\mathbf{H}_{u}$ and $\mathbf{H}_{d}$ have i.i.d. $\mathcal{C}\mathcal{N}(0,1)$ elements, while the $k$th element of large-scale diagonal fading matrices $\mathbf{D}_{u}$ and $\mathbf{D}_{d}$ are denoted as $\sigma_{g,k}^{2}$ and $\sigma_{f,k}^{2}$, respectively.
The inter-element distance is assumed to be smaller than the distance between the transmit array and the receive array which leads to the channel between the transmit and receive antennas to be independent.

In the first time slot $(n=1)$, the relay only receives the signal and does not transmit. The signals received at the relay and the user $S_{k}$ are given respectively as 
\begin{eqnarray}
\mathbf{y}_{R}(1) &=& \tilde{\mathbf{G}}\mathbf{x}(1) +\mathbf{z}_{R}(1)\iftoggle{SINGLE_COL}{}{\nonumber\\}\iftoggle{BIG_EQUATION}{}{,\quad}
y_{k}(1)\iftoggle{SINGLE_COL}{}{&=&} \sum_{i,k\in U_{k}}\Omega_{k,i}\sqrt{p_{i}}x_{i}(1)+z_{k}(1).
\end{eqnarray}
At the $n$th time slot, the relay linearly precodes its received signal $\mathbf{y}_{R}(n-1)$ using a matrix $\mathbf{W}$ such that
\iftoggle{SINGLE_COL}{\vspace*{-0.15in}}{}\begin{align}\label{xrn}
\mathbf{x}_{R}(n) &= \alpha \mathbf{W}\mathbf{y}_{R}(n-1),
\end{align}
where $\alpha$ is the scaling factor chosen to satisfy the relay power constraint. The relay transmit signal $\mathbf{x}_{R}$, similar to \cite{DBLP:journals/jsac/ZhangCSX16}, can be re-expressed using (\ref{yrn}) as 
\begin{align}
\hspace{-0.1in}\mathbf{x}_{R}(n) &= s\left(\mathbf{x}(n-\nu)+\mathbf{x}(n-2\nu)+\cdots\iftoggle{SINGLE_COL}{}{\right.\nonumber\\&\left.\,\,\,}
+\,\mathbf{z}_{R}(n-\nu)+\mathbf{z}_{R}(n-2\nu)+\cdots \right).
\end{align}
Here $s(\cdot)$ is a function involving vector and matrix operation, and $\nu$ is the relay processing  delay ($\nu = 1$ in this paper).
There are various solution proposed in the relaying literature, e.g., \cite{riihonen2009spatial}, that significantly suppress the self-loop interference caused due to $\mathbf{x}_{R}$  such that the residual self-loop interference can be replaced with $\tilde{\mathbf{x}}_{R}(n)$, an additional Gaussian noise source with $\left\{\mathbb{E}\left[\tilde{\mathbf{x}}_{R}(n)\tilde{\mathbf{x}}_{R}^{H}(n)\right]\right\} =\frac{P_{R}}{N}\mathbf{I}_N$ \cite{DBLP:journals/jsac/ZhangCSX16}. Therefore, the relay receive signal in \eqref{yrn} can be re-expressed~as\iftoggle{SINGLE_COL}{\vspace*{-0.15in}}{}
\begin{align}\label{modyr}
\tilde{\mathbf{y}}_{R}(n) &= \tilde{\mathbf{G}}\mathbf{x}(n) +\mathbf{G_{RR}}\tilde{\mathbf{x}}_{R}(n)+\mathbf{z}_{R}(n).
\end{align}
We re-write the relay transmit signal in (\ref{xrn}), using (\ref{modyr}),  as
\begin{align}\label{modixrn2}
\mathbf{x}_{R} &= \alpha \mathbf{W}\tilde{\mathbf{y}}_{R}(n-1).
\end{align}
For the sake of brevity, we will drop the time labels. Using (\ref{modyr}), we re-write (\ref{xrn}) as
\begin{align}\label{modixrn2}
\mathbf{x}_{R} &= \alpha \mathbf{W}\tilde{\mathbf{y}}_{R}= \alpha \mathbf{W}\tilde{\mathbf{G}}\mathbf{x} + \alpha \mathbf{W} \mathbf{G_{RR}}\tilde{\mathbf{x}}_{R} + \alpha \mathbf{W}\mathbf{z}_{R}.
\end{align}
The relay transmit signal should satisfy its transmit power constraint such that
\begin{align}\label{exppr}
\hspace{-0.1in}\mathbf{P}_{R} &= \mbox{Tr}\left\{\mathbb{E}\left[\mathbf{x}_{R}\mathbf{x}^{H}_{R}\right]\right\}\iftoggle{SINGLE_COL}{}{\nonumber\\
&}\hspace{-0.2in}= \hspace{-0.03in}\mathbb{E}\hspace{-0.03in}\left[\hspace{-0.03in}\|\alpha \mathbf{W}\tilde{\mathbf{G}}\mathbf{x}\|^{2}\hspace{-0.02in}\right]\hspace{-0.05in}+\hspace{-0.03in}\mathbb{E}\left[\|\alpha \mathbf{W} \mathbf{G_{RR}}\tilde{\mathbf{x}}_{R}\|^{2}\right]\hspace{-0.03in}+\hspace{-0.03in}\mathbb{E}\left[\|\alpha \mathbf{W}\mathbf{z}_{R}\|^{2}\right]\hspace{-0.03in},
\end{align}
which leads to the following value of the scaling factor $\alpha$
\begin{align}\label{alpha}
\hspace{-0.1in}\alpha \hspace{-0.03in}=\hspace{-0.05in} \sqrt{\frac{P_{R}}{\mathbb{E}\left[\|\mathbf{W}\tilde{\mathbf{G}}\mathbf{x}\|^{2}\right]\hspace{-0.05in}+\hspace{-0.05in}\mathbb{E}\left[\| \mathbf{W} \mathbf{G_{RR}}\tilde{\mathbf{x}}_{R}\|^{2}\right]\hspace{-0.05in}+\hspace{-0.05in}\mathbb{E}\left[\| \mathbf{W}\mathbf{z}_{R}\|^{2}\right]}}.
\end{align}
We next re-express the received signal after self-interference cancellation (SIC) at the user $S_{k}$ given in (\ref{ykn}), using (\ref{modixrn2}), as
\begin{eqnarray}\label{yktilde}
\tilde{y}_{k} &=& \underbrace{\alpha \mathbf{f}_{k}^{T}\mathbf{W}\sqrt{p_{k^{'}}}{\mathbf{g}_{k^{'}}}x_{k^{'}}}_{\text{desired signal}} +\underbrace{\alpha\sqrt{p_{k}}\lambda_{k}x_{k}}_{\text{residual interference}}\iftoggle{SINGLE_COL}{}{\nonumber\\[-2pt]
&{+}&} \underbrace{\alpha \mathbf{f}_{k}^{T}\mathbf{W}\sum\limits_{i\neq k,k^{'}}^{2K}\sqrt{p_{i}}\mathbf{g}_{i}x_{i}}_{\text{inter-pair interference}}+\underbrace{\alpha \mathbf{f}_{k}^{T}\mathbf{W} \mathbf{G_{RR}}\tilde{\mathbf{x}}_{R}}_{\text{amplified loop interference}}\nonumber\\
&{+}&\hspace{-0.2in} \underbrace{\alpha \mathbf{f}_{k}^{T}\mathbf{W}\mathbf{z}_{R}}_{\text{amplified noise from relay}}\hspace{-0.1in}+\underbrace{\sum\limits_{i,k\in U{k}}\Omega_{k,i}\sqrt{p_{i}}x_{i}}_{\substack{\text{self loop interference}\\ \text{and inter-user interference}}} + \hspace{-0.1in}\underbrace{z_{k}}_{\text{AWGN at $S_{k}$}}\hspace{-0.2in}.
\end{eqnarray}
Here $(k,k^{'})=(2m-1,2m)$ or $(2m,2m-1)$, for $m=1,2,3,\cdots,K$ denotes the user pair which exchange information with one another. The scalar $\lambda_{k} = {\mathbf{f}}_{k}^{T}\mathbf{W}{{\mathbf{g}_{k}}}-\hat{\mathbf{f}}_{k}^{T}\mathbf{W}{\hat{\mathbf{g}_{k}}}$ is the residual self-interference.
In this work we assume that the relay estimates channels $\mathbf{G}$ and $\mathbf{F}$ and uses them to design the precoder $\mathbf{W}$. The relay then transmits the SIC coefficient $\hat{\mathbf{f}}_{k}^T\mathbf{W}\hat{\mathbf{g}}_{k}$ for each user, where $\hat{\mathbf{f}}_{k}$ and $\hat{\mathbf{g}}_{k}$ are the estimated channel coefficients.
Before designing the relay precoder $\mathbf{W}$, we briefly digress to discuss the MMSE channel estimation~process.
\iftoggle{SINGLE_COL}{}{\vspace*{-0.1in}}
\section{Channel Estimation}
\label{ch_est_ref}\iftoggle{SINGLE_COL}{}{\vspace*{-0.1in}}
Assuming the coherence interval for transmission to be $T$ symbols, all users simultaneously transmit pilot sequence of length $\tau\leq T$ symbols to the relay. During pilot transmission phase, the relay will receive the following signal at the receive and transmit antenna array
\begin{align}
\hspace{-0.11in}\mathbf{Y}_{R,R}\hspace{-0.03in} &=\hspace{-0.03in} \sqrt{\tau P_{\rho}}\mathbf{G}\mathbf{\varphi}\hspace{-0.03in} +\hspace{-0.03in} \mathbf{N}_{R,R},\hspace{-0.03in}\mbox{ and }\hspace{-0.03in} \mathbf{Y}_{R,T} \iftoggle{BIG_EQUATION}{}{&}\hspace{-0.03in}=\hspace{-0.03in} \sqrt{\tau P_{\rho}}\mathbf{{F}}{\mathbf{\varphi}} \hspace{-0.03in}+\hspace{-0.03in} \mathbf{N}_{R,T}
\end{align}
where $\sqrt{\tau P_{\rho}}\mathbf{\varphi}\in\mathbb{C}^{2K\times \tau}$ denotes the pilot symbols transmitted from all users with $P_{\rho}$ being the transmit power of each pilot symbol. The matrices $\mathbf{N}_{R,R}$ and $\mathbf{N}_{R,T}$ denote the AWGN noise whose elements are i.i.d. and distributed as $\mathcal{C}\mathcal{N}(0,1)$. The pilots are assumed to be orthogonal such that $\mathbf{\varphi}\mathbf{\varphi}^{H}= \mathbf{I}_{2K}$ for $\tau \geq 2K$\cite{Biguesh}. The MMSE channel estimate of $\mathbf{G}$ and $\mathbf{F}$ are given by~\cite{kay1993fundamentals}\iftoggle{SINGLE_COL}{\vspace*{-0.2in}}{}
\begin{align}
\hat{\mathbf{G}}&= \frac{1}{\sqrt{\tau P_{\rho}}}\mathbf{Y}_{R,R}\mathbf{\varphi}^{H}\bar{\mathbf{D}}_{u}=\mathbf{G}\bar{\mathbf{D}}_{u}+\frac{\mathbf{N}_{R,R}\mathbf{\varphi}^{H}}{\sqrt{\tau P_{\rho}}}\bar{\mathbf{D}}_{u}\nonumber\\
\hat{\mathbf{F}}&= \frac{1}{\sqrt{\tau P_{\rho}}}\mathbf{Y}_{R,T}{\mathbf{\varphi}}^{H}\bar{\mathbf{D}}_{d}=\mathbf{F}\bar{\mathbf{D}}_{d}+\frac{\mathbf{N}_{R,T}\mathbf{\varphi}^{H}}{\sqrt{\tau P_{\rho}}}\bar{\mathbf{D}}_{d}.\end{align}
The matrices $\hat{\mathbf{G}}=\left[\hat{\mathbf{g}}_{1},\,\hat{\mathbf{g}}_{2},\,\cdots,\,\hat{\mathbf{g}}_{2K}\right]\in\mathbb{C}^{N\times 2K}$ and $\hat{\mathbf{F}}=\left[\hat{\mathbf{f}}_{1},\,\hat{\mathbf{f}}_{2},\,\cdots,\,\hat{\mathbf{f}}_{2K}\right]\in\mathbb{C}^{N\times 2K}$, $\bar{\mathbf{D}}_{u}=\left(\frac{\mathbf{D}_{u}^{-1}}{{\tau P_{\rho}}}+\mathbf{I}_{2K}\right)^{-1}$ and $\bar{\mathbf{D}}_{d}=\left(\frac{\mathbf{D}_{d}^{-1}}{{\tau P_{\rho}}}+\mathbf{I}_{2K}\right)^{-1}$. 

We note that the elements of $\mathbf{N}^{H}_{R,R}\mathbf{\varphi}$ and $\tilde{\mathbf{N}}^{H}_{R,T}\mathbf{\varphi}$ are  distributed as $\mathcal{C}\mathcal{N}(0,1)$. We therefore have
$\hat{\mathbf{G}} = \mathbf{G}-\mathbf{E}_{g} \text{ and }\hat{\mathbf{F}} = \mathbf{F}-\mathbf{E}_{f}$,
where $\mathbf{E}_{g}$ and $\mathbf{E}_{f}$ are estimation error matrices. The channel matrices $\hat{\mathbf{G}}$ and $\hat{\mathbf{F}}$ are independent of the error matrices ${\mathbf{E}_{g}}$ and ${\mathbf{E}_{f}}$, respectively\cite{kay1993fundamentals}. The matrices $\hat{\mathbf{G}}$ and $\hat{\mathbf{F}}$ are distributed as $\mathcal{C}\mathcal{N}(0,\hat{\mathbf{D}}_{u})$ and $\mathcal{C}\mathcal{N}(0,\hat{\mathbf{D}}_{d})$, respectively; the matrices  $\hat{\mathbf{D}}_{u}=\mbox{diag}\left\{\hat\sigma_{g,1}^{2},\,\hat\sigma_{g,2}^{2},\,\cdots,\,\hat\sigma_{g,2K}^{2}\right\}$ and $\hat{\mathbf{D}}_{d}=\mbox{diag}\left\{\hat\sigma_{f,1}^{2},\,\hat\sigma_{f,2}^{2},\,\cdots,\,\hat\sigma_{f,2K}^{2}\right\}$, with $\hat\sigma_{g,k}^{2}=\frac{\tau P_{\rho}\sigma_{g,k}^{4}}{\tau P_{\rho}\sigma_{g,k}^{2}+1}$ and $\hat\sigma_{f,k}^{2}=\frac{\tau P_{\rho}\sigma_{f,k}^{4}}{\tau P_{\rho}\sigma_{f,k}^{2}+1}$. Hence, ${\mathbf{E_{g}}}\sim \mathcal{C}\mathcal{N}(0,\mathbf{D}_{u}-\hat{\mathbf{D}}_{u})$ and ${\mathbf{E_{f}}}\sim \mathcal{C}\mathcal{N}(0,\mathbf{D}_{d}-\hat{\mathbf{D}}_{d})$, with $\mathbf{D}_{u}-\hat{\mathbf{D}}_{u} = \mbox{diag}\left\{\sigma_{\xi,g,1}^{2},\,\sigma_{\xi,g,2}^{2},\,\sigma_{\xi,g,3}^{2},\,\cdots,\,\sigma_{\xi,g,2K}^{2}\right\}$ and $\mathbf{D}_{d}-\hat{\mathbf{D}}_{d} = \mbox{diag}\left\{\sigma_{\xi,f,1}^{2},\,\sigma_{\xi,f,2}^{2},\,\sigma_{\xi,f,3}^{2},\,\cdots,\,\sigma_{\xi,f,2K}^{2}\right\}$ with  $\sigma_{\xi,g,k}^{2}=\frac{\sigma_{g,k}^{2}}{\tau P_{\rho}\sigma_{g,k}^{2}+1}$ and $\sigma_{\xi,f,k}^{2}=\frac{\sigma_{f,k}^{2}}{\tau P_{\rho}\sigma_{f,k}^{2}+1}$.
\section{Relay precoder design}
\label{relay_pre_des}\iftoggle{SINGLE_COL}{}{\vspace*{-0.05in}}
We design relay precoder based on ZFR/ZFT processing. \iftoggle{SINGLE_COL}{}{
\begin{figure*}
\normalsize
\setcounter{mytempeqncnt}{\value{equation}}
\setcounter{equation}{19}
\begin{align}
\mbox{SNR}_{k} &= \frac{\alpha^{2}p_{k^{'}}|{\mathbf{f}}_{k}^{T}\mathbf{W}{{\mathbf{g}_{k^{'}}}}|}{\alpha^{2}p_{k}|\lambda_{k}|^{2}+\alpha^{2}\sum\limits_{i\neq k,k^{'}}^{2K}|{\mathbf{f}}_{k}^{T}\mathbf{W}{{\mathbf{g}_{i}}}|^{2}+\alpha^{2}\|{\mathbf{f}}_{k}^{T}\mathbf{W}\mathbf{G}_{RR}\|^{2}\frac{P_{R}}{N}+\alpha^{2}\|{\mathbf{f}}_{k}^{T}\mathbf{W}\|^{2}\sigma_{n_{R}}^{2}+\sum\limits_{i,k\in U{k}}\sigma_{k,i}^{2}{p_{i}} + \sigma_{n}^{2}}\label{snrk}\setcounter{equation}{19}\\
\mbox{SNR}_{k,\mbox{lower}} &= \frac{\alpha^{2}p_{k^{'}}\left|\mathbb{E}\left[{\mathbf{f}}_{k}^{T}\mathbf{W}{{\mathbf{g}_{k^{'}}}}\right]\right|^{2}}{\alpha^{2}p_{k^{'}}\mbox{var}\left[{\mathbf{f}}_{k}^{T}\mathbf{W}{{\mathbf{g}_{k^{'}}}}\right]+\alpha^{2}p_{k}\mbox{SI}_{k}+\alpha^{2}\mbox{IP}_{k}+\alpha^{2}\mbox{NR}_{k}+\alpha^{2}\mbox{LIR}_{k}+\mbox{UI}_{k}+\mbox{NU}_{k}}\setcounter{equation}{24}\label{gammalower}
\end{align}
\setcounter{equation}{38}
\hrule
\end{figure*}
}
The ZFR/ZFT matrix using estimated CSI is given by
\begin{eqnarray}\label{wzf}\setcounter{equation}{14}
\mathbf{W} &=& \hat{\bar{\mathbf{F}}}^{*}\mathbf{T}\hat{\bar{\mathbf{G}}}^{H},\label{wzf}
\end{eqnarray}
where $\hat{\bar{\mathbf{F}}}=\hat{\mathbf{F}}\left(\hat{\mathbf{F}}^{H}\hat{\mathbf{F}}\right)^{-1}$ and $\hat{\bar{\mathbf{G}}}=\hat{\mathbf{G}}\left(\hat{\mathbf{G}}^{H}\hat{\mathbf{G}}\right)^{-1}$.
We next state the following proposition to simplify the relay scaling factor $\alpha$ in \eqref{alpha}.
\iftoggle{SINGLE_COL}{}{\vspace*{-0.05in}}
\begin{Proposition}\label{pre2}
For ZFR/ZFT precoder
\begin{eqnarray}\label{alphazfproof}
\alpha &=& \displaystyle{\sqrt{\frac{P_{R}}{\hat\lambda +\hat{\eta}\left(\sum\limits_{i=1}^{2K} p_{i}\sigma_{\xi,g,i}^{2}+\sigma_{nr}^{2}+P_{R}\sigma_{LIR}^{2}\right)}}},
\end{eqnarray}
where
$\hat\lambda = \sum\limits_{i=1}^{2K}\frac{p_{i^{'}}}{\left(N-2K-1\right)\hat\sigma_{f,i}^{2}},\, \hat{\eta} = \sum_{j=1}^{2K}\frac{1}{\left(N-2K-1\right)^{2}\hat\sigma_{f,j}^{2}\hat\sigma_{g,	j^{'}}^{2}}.$
\end{Proposition}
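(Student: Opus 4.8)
The plan is to substitute the ZFR/ZFT precoder \eqref{wzf} into the three expectation terms appearing in the denominator of \eqref{alpha} and to reduce each of them to one of two canonical averages. First I would record the two pseudo-inverse identities that the precoder is designed to satisfy, namely $\hat{\bar{\mathbf{G}}}^{H}\hat{\mathbf{G}} = \mathbf{I}_{2K}$ and $\hat{\mathbf{F}}^{T}\hat{\bar{\mathbf{F}}}^{*} = \mathbf{I}_{2K}$, which follow at once from $\hat{\bar{\mathbf{F}}}=\hat{\mathbf{F}}(\hat{\mathbf{F}}^{H}\hat{\mathbf{F}})^{-1}$ and $\hat{\bar{\mathbf{G}}}=\hat{\mathbf{G}}(\hat{\mathbf{G}}^{H}\hat{\mathbf{G}})^{-1}$. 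These collapse the key product to $\mathbf{W}\hat{\mathbf{G}} = \hat{\bar{\mathbf{F}}}^{*}\mathbf{T}$.

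For the signal term $\mathbb{E}[\|\mathbf{W}\tilde{\mathbf{G}}\mathbf{x}\|^{2}]$ I would use $\mathbb{E}[\mathbf{x}\mathbf{x}^{H}]=\mathbf{I}_{2K}$ and $\tilde{\mathbf{G}}=\mathbf{G}\mathbf{P}$ to rewrite it as $\mathbb{E}[\mbox{Tr}(\mathbf{W}\mathbf{G}\mathbf{P}^{2}\mathbf{G}^{H}\mathbf{W}^{H})]$, then split $\mathbf{G}=\hat{\mathbf{G}}+\mathbf{E}_{g}$. Since $\mathbf{E}_{g}$ is zero-mean and independent of $\hat{\mathbf{G}},\hat{\mathbf{F}}$ and hence of $\mathbf{W}$, the cross terms vanish: the $\hat{\mathbf{G}}$ part gives $\mathbb{E}[\mbox{Tr}(\hat{\bar{\mathbf{F}}}^{*}\mathbf{T}\mathbf{P}^{2}\mathbf{T}^{H}\hat{\bar{\mathbf{F}}}^{T})]$, while the error part, after averaging $\mathbb{E}[\mathbf{E}_{g}\mathbf{P}^{2}\mathbf{E}_{g}^{H}]=(\sum_{i}p_{i}\sigma_{\xi,g,i}^{2})\mathbf{I}_{N}$, reduces to $(\sum_{i}p_{i}\sigma_{\xi,g,i}^{2})\,\mathbb{E}[\mbox{Tr}(\mathbf{W}\mathbf{W}^{H})]$. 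The loop-interference and relay-noise terms are treated the same way: with $\mathbb{E}[\tilde{\mathbf{x}}_{R}\tilde{\mathbf{x}}_{R}^{H}]=\tfrac{P_{R}}{N}\mathbf{I}_{N}$, $\mathbb{E}[\mathbf{G}_{RR}\mathbf{G}_{RR}^{H}]=N\sigma_{LIR}^{2}\mathbf{I}_{N}$, $\mathbb{E}[\mathbf{z}_{R}\mathbf{z}_{R}^{H}]=\sigma_{nr}^{2}\mathbf{I}_{N}$ and the independence of $\mathbf{G}_{RR},\mathbf{z}_{R}$ from $\mathbf{W}$, they collapse to $P_{R}\sigma_{LIR}^{2}\,\mathbb{E}[\mbox{Tr}(\mathbf{W}\mathbf{W}^{H})]$ and $\sigma_{nr}^{2}\,\mathbb{E}[\mbox{Tr}(\mathbf{W}\mathbf{W}^{H})]$. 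Collecting everything shows the denominator of \eqref{alpha} equals $\hat\lambda+\hat{\eta}\big(\sum_{i}p_{i}\sigma_{\xi,g,i}^{2}+\sigma_{nr}^{2}+P_{R}\sigma_{LIR}^{2}\big)$ once I set $\hat\lambda := \mathbb{E}[\mbox{Tr}(\hat{\bar{\mathbf{F}}}^{*}\mathbf{T}\mathbf{P}^{2}\mathbf{T}^{H}\hat{\bar{\mathbf{F}}}^{T})]$ and $\hat{\eta} := \mathbb{E}[\mbox{Tr}(\mathbf{W}\mathbf{W}^{H})]$.

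The remaining step, and the delicate one, is to evaluate these two averages in closed form. Using $\hat{\bar{\mathbf{F}}}^{T}\hat{\bar{\mathbf{F}}}^{*} = ((\hat{\mathbf{F}}^{H}\hat{\mathbf{F}})^{-1})^{*}$ and $\hat{\bar{\mathbf{G}}}^{H}\hat{\bar{\mathbf{G}}}=(\hat{\mathbf{G}}^{H}\hat{\mathbf{G}})^{-1}$, together with the independence of $\hat{\mathbf{F}}$ and $\hat{\mathbf{G}}$, I would write $\hat\lambda = \mbox{Tr}\big(\mathbf{T}\mathbf{P}^{2}\mathbf{T}^{H}\,\mathbb{E}[(\hat{\mathbf{F}}^{H}\hat{\mathbf{F}})^{-1}]^{*}\big)$ and $\hat{\eta} = \mbox{Tr}\big(\mathbb{E}[(\hat{\mathbf{G}}^{H}\hat{\mathbf{G}})^{-1}]\,\mathbf{T}^{H}\,\mathbb{E}[(\hat{\mathbf{F}}^{H}\hat{\mathbf{F}})^{-1}]^{*}\,\mathbf{T}\big)$. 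Because $\hat{\mathbf{F}}$ and $\hat{\mathbf{G}}$ have independent columns $\hat{\mathbf{f}}_{j}\sim\mathcal{C}\mathcal{N}(0,\hat\sigma_{f,j}^{2}\mathbf{I}_{N})$ and $\hat{\mathbf{g}}_{j}\sim\mathcal{C}\mathcal{N}(0,\hat\sigma_{g,j}^{2}\mathbf{I}_{N})$, the Gram matrices are central complex Wishart, and the inverse-Wishart moment yields the diagonal, real matrices $\mathbb{E}[(\hat{\mathbf{F}}^{H}\hat{\mathbf{F}})^{-1}] = \tfrac{1}{N-2K-1}\hat{\mathbf{D}}_{d}^{-1}$ and $\mathbb{E}[(\hat{\mathbf{G}}^{H}\hat{\mathbf{G}})^{-1}] = \tfrac{1}{N-2K-1}\hat{\mathbf{D}}_{u}^{-1}$. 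Finally, since $\mathbf{T}$ is the pairing permutation $i\mapsto i'$, the products $\mathbf{T}\mathbf{P}^{2}\mathbf{T}^{H}$ and $\mathbf{T}^{H}\hat{\mathbf{D}}_{d}^{-1}\mathbf{T}$ merely relabel diagonal entries; reading off the diagonals produces exactly $\hat\lambda = \sum_{i}\tfrac{p_{i'}}{(N-2K-1)\hat\sigma_{f,i}^{2}}$ and $\hat{\eta} = \sum_{j}\tfrac{1}{(N-2K-1)^{2}\hat\sigma_{f,j}^{2}\hat\sigma_{g,j'}^{2}}$, which completes the proof.

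I expect the main obstacle to be this last evaluation: obtaining the closed form of the inverse-Wishart expectations and then bookkeeping the pairing permutation $\mathbf{T}$ correctly against the conjugations and the two distinct large-scale diagonal matrices $\hat{\mathbf{D}}_{u},\hat{\mathbf{D}}_{d}$. Everything preceding it is a routine matter of exploiting the zero-forcing pseudo-inverse identities and the independence of the estimation error $\mathbf{E}_{g}$, the loop-interference channel $\mathbf{G}_{RR}$ and the relay noise $\mathbf{z}_{R}$ from the precoder $\mathbf{W}$.
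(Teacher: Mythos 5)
Your proposal is correct and follows essentially the same route as the paper's proof: substituting the ZFR/ZFT precoder into the three expectations of \eqref{alpha}, splitting $\mathbf{G}=\hat{\mathbf{G}}+\mathbf{E}_{g}$ and exploiting independence, and then evaluating the resulting Gram-inverse traces via the inverse-Wishart first moment $\mathbb{E}\left[(\hat{\mathbf{F}}^{H}\hat{\mathbf{F}})^{-1}\right]=\frac{\hat{\mathbf{D}}_{d}^{-1}}{N-2K-1}$ (the paper writes this element-wise through $\hat{\mathbf{\Lambda}}_{F},\hat{\mathbf{\Lambda}}_{G}$ and $\hat{w}_{f,i,j}$, you keep it in matrix-trace form). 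Your observation that the error, loop-interference and relay-noise terms all reduce to the single quantity $\mathbb{E}\left[\mbox{Tr}(\mathbf{W}\mathbf{W}^{H})\right]=\hat{\eta}$ is only a cleaner bookkeeping of what the paper does term by term, not a different argument.
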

\begin{proof}
To derive this result, we will first simplify $\mathbb{E}\left[\|\mathbf{W}\tilde{\mathbf{G}}\mathbf{x}\|^{2}\right]$ using (\ref{wzf}).
\begin{align}
&\mathbb{E}\left[\|\mathbf{W}\tilde{\mathbf{G}}\mathbf{x}\|^{2}\right]=
\mathbb{E}\left[\|\hat{\bar{\mathbf{F}}}^{*}\mathbf{T}\hat{\bar{\mathbf{G}}}^{H}\tilde{\mathbf{G}}\mathbf{x}\|^{2}\right]\nonumber\\
&=\mbox{Tr}\left\{\mathbb{E}\left[(\hat{\bar{\mathbf{F}}}^{*}\mathbf{T}\hat{\bar{\mathbf{G}}}^{H}(\hat{\mathbf{G}}+\mathbf{E}_{g})\mathbf{P}\mathbf{P}^{H}(\hat{\mathbf{G}}^{H}+\mathbf{E}_{g}^{H})\hat{\bar{\mathbf{G}}}\mathbf{T}^{H}\hat{\bar{\mathbf{F}}}^{T})\right]\right\}\nonumber\\
&\stackrel{(a)}{=}\mbox{Tr}\left\{\mathbb{E}\left[\hat{\bar{\mathbf{F}}}^{*}\mathbf{T}\mathbf{P}\mathbf{P}^{H}\mathbf{T}\hat{\bar{\mathbf{F}}}^{T}\right]\right\}\iftoggle{SINGLE_COL}{}{\nonumber\\
&} + \sum_{i=1}^{2K}p_{i}\sigma_{\xi,g,i}^{2}\mbox{Tr}\left\{\mathbb{E}\left[\hat{\bar{\mathbf{G}}}\mathbf{T}^{H}\hat{\bar{\mathbf{F}}}^{T}\hat{\bar{\mathbf{F}}}^{*}\mathbf{T}\hat{\bar{\mathbf{G}}}^{H}\right]\right\}\nonumber\\
&\stackrel{(b)}{=} \sum_{i=1}^{2K}p_{i^{'}}\mathbb{E}\left[\hat{\bar{\mathbf{f}}}_{i}^{H}\hat{\bar{\mathbf{f}}}_{i}\right]+ \sum_{i=1}^{2K}p_{i}\sigma_{\xi,g,i}^{2}\mbox{Tr}\left\{\mathbb{E}\left[\hat{\mathbf{\Lambda}}_{F}^{*}\mathbf{T}\hat{\mathbf{\Lambda}}_{G}\mathbf{T}\right]\right\}\iftoggle{SINGLE_COL}{}{\nonumber\\
&}\stackrel{(c)}{=} \hat{\lambda} + \sum_{i=1}^{2K} p_{i}\sigma_{\xi,g,i}^{2}\hat{\eta},\label{firstzfalpha}
\end{align}
The equality  in $(a)$ is obtained by using the fact that $\hat{\bar{\mathbf{G}}}^{H}\hat{\mathbf{G}} = \hat{\mathbf{G}}^{H}\hat{\bar{\mathbf{G}}}=\mathbf{I}_{2K}$, and  $\mathbb{E}\left[{\mathbf{E}_{g}}\mathbf{P}\mathbf{P}^{H}{\mathbf{E}_{g}}^{H}\right]=\sum\limits_{i=1}^{2K}p_{i}\sigma_{\xi,g,i}^{2}\mathbf{I}_{N}$. In equality $(b)$, we define $\hat{\mathbf{\Lambda}}_{F}\triangleq(\hat{\bar{\mathbf{F}}}^{H}\hat{\bar{\mathbf{F}}})=\left({\bar{\mathbf{F}}}^{H}{\bar{\mathbf{F}}}\right)^{-1}$
and $\hat{\mathbf{\Lambda}}_{G}\triangleq\left(\hat{\bar{\mathbf{G}}}^{H}\hat{\bar{\mathbf{G}}}\right)=\left({\bar{\mathbf{G}}}^{H}{\bar{\mathbf{G}}}\right)^{-1}$ and use the fact that $\mathbf{T}\mathbf{P}\mathbf{P}^{H}\mathbf{T}=\mbox{diag}\{p_{2},p_{1},\cdots,p_{2K},p_{2K-1}\}$.
To derive equality in $(c)$, we first note that the random matrices $\hat{\mathbf{\Lambda}}_{F}$ and $\hat{\mathbf{\Lambda}}_{G}$ have inverse Wishart distribution, where $\hat{\mathbf{\Lambda}}_{F}\sim\mathcal{W}^{-1}(\hat{\mathbf{D}}_{d}^{-1},2K)$, $\hat{\mathbf{\Lambda}}_{G}\sim\mathcal{W}^{-1}(\hat{\mathbf{D}}_{u}^{-1},2K)$, with $\hat{w}_{f,i,j} = \left(\hat{\mathbf{\Lambda}}_{F}\right)_{i,j},\,\hat{w}_{g,i,j} = \left(\hat{\mathbf{\Lambda}}_{G}\right)_{i,j},\,\forall i,j = 1,2,3,...,2K$ and $\mathbb{E}\left[\hat{\mathbf{\Lambda}}_{F}\right] = \frac{\hat{\mathbf{D}}_{d}^{-1}}{N-2K-1}$, $\mathbb{E}\left[\hat{\mathbf{\Lambda}}_{G}\right] = \frac{\hat{\mathbf{D}}_{u}^{-1}}{N-2K-1}$ \cite{graczyk2003complex}. We also have $\mathbb{E}\left[\hat{\bar{\mathbf{f}}}_{i}^{H}\hat{\bar{\mathbf{f}}}_{j}\right]=\mathbb{E}\left[\hat{w}_{f,i,j}\right] =\frac{1}{(N-2K-1)\hat\sigma_{f,i}^{2}}, \forall i=j \,\mbox{and}\, 0,\, \mbox{otherwise}$, and similarly, $\mathbb{E}\left[\hat{\bar{\mathbf{g}}}_{i}^{H}\hat{\bar{\mathbf{g}}}_{j}\right]=\mathbb{E}\left[\hat{w}_{g,i,j}\right] =\frac{1}{(N-2K-1)\hat\sigma_{g,i}^{2}},\forall i=j \,\mbox{and}\, 0,\, \mbox{otherwise}$. With the above equalities, we obtain the equality in $(c)$, where $\hat\lambda = \sum_{i=1}^{2K}\frac{p_{i^{'}}}{\left(N-2K-1\right)\hat\sigma_{f,i}^{2}}$. The expression  $\mbox{Tr}\left\{\mathbb{E}\left[\hat{\mathbf{\Lambda}}_{F}^{*}\mathbf{T}\hat{\mathbf{\Lambda}}_{G}\mathbf{T}\right]\right\}$ is simplified~as 
$\mbox{Tr}\left\{\mathbb{E}\left[\hat{\mathbf{\Lambda}}_{F}^{*}\mathbf{T}\hat{\mathbf{\Lambda}}_{G}\mathbf{T}\right]\right\}=\sum_{j=1}^{2K}\frac{1}{\left(N-2K-1\right)^{2}\hat\sigma_{f,j}^{2}\hat\sigma_{g,	j^{'}}^{2}}\triangleq\hat{\eta}.$
On similar lines, we have
\begin{align}\setcounter{equation}{17}\label{Thirdzfalpha}
&\mathbb{E}\left[\|\mathbf{W}\mathbf{G}_{RR}\tilde{\mathbf{x}}_{R}\|^{2}\right]=P_{R}\sigma_{LIR}^{2}\hat{\eta}.
\end{align}
The last term in the denominator of (\ref{alpha}) can be simplified as
\begin{align}\label{seczfalpha}
&\hspace{-0.3in}\mathbb{E}\left[\|\hat{\bar{\mathbf{F}}}^{*}\mathbf{T}\hat{\bar{\mathbf{G}}}^{H}\mathbf{z}_{R}\|^{2}\right]= \sigma_{nr}^{2}\hat{\eta}.
\end{align}
By using (\ref{firstzfalpha}), (\ref{Thirdzfalpha}) and (\ref{seczfalpha}), we get (\ref{alphazfproof}).
\end{proof}

\section{Spectral efficiency of ZFR/ZFT precoder}
\label{rate_ana_ref}
In this section, we calculate lower bound on the instantaneous spectral efficiency for ZFR/ZFT precoder. The instantaneous $\mbox{SNR}_{k}$ at the user $S_{k}$ can be expressed using (\ref{yktilde}) as in \iftoggle{SINGLE_COL}{}{(\ref{snrk}) (shown at the top of this page).}
\iftoggle{BIG_EQUATION}{}{
\begin{align}\label{snrk}
\mbox{SNR}_{k} &= \frac{\alpha^{2}p_{k^{'}}|{\mathbf{f}}_{k}^{T}\mathbf{W}{{\mathbf{g}_{k^{'}}}}|}{\alpha^{2}p_{k}|\lambda_{k}|^{2}+\alpha^{2}\sum\limits_{i\neq k,k^{'}}^{2K}|{\mathbf{f}}_{k}^{T}\mathbf{W}{{\mathbf{g}_{i}}}|^{2}+\alpha^{2}\|{\mathbf{f}}_{k}^{T}\mathbf{W}\mathbf{G}_{RR}\|^{2}\frac{P_{R}}{N}+\alpha^{2}\|{\mathbf{f}}_{k}^{T}\mathbf{W}\|^{2}\sigma_{n_{R}}^{2}+\sum\limits_{i,k\in U{k}}\sigma_{k,i}^{2}{p_{i}} + \sigma_{n}^{2}}.\iftoggle{SINGLE_COL}{}{\nonumber\\}
\end{align}
}
The spectral efficiency of the system which includes the channel estimation overhead~is
\begin{eqnarray}\setcounter{equation}{21}
R = \left(1-\frac{\tau}{T}\right)\mathbb{E}\left\{\sum_{k=1}^{2K}\log_{2}\left(1+\mbox{SNR}_{k}\right)\right\}.\label{sumexact}
\end{eqnarray}
\iftoggle{SINGLE_COL}{}{
\begin{figure*}\setcounter{equation}{26}
\begin{align}
\hspace{-0.4in}\mbox{SNR}_{k}^{\mbox{zf}}
(p_{k},P_{R}) &= \frac{u_{k}p_{k^{'}}}{\displaystyle{\sum_{i=1}^{2K}\hspace{-0.03in}\left(\hspace{-0.03in}d^{(1)}_{k,i}\hspace{-0.01in}+\hspace{-0.01in}d^{(2)}_{k,i}P_{R}^{-1}\hspace{-0.01in}+\hspace{-0.06in}\sum_{i,k\in U_{k}}\hspace{-0.06in}p_{i}P_{R}^{-1}d^{(3)}_{k,i}\hspace{-0.03in}\right)\hspace{-0.03in}p_{i}\hspace{-0.01in}+\hspace{-0.01in}\left(\hspace{-0.01in}v^{(1)}_{k}+v^{(2)}_{k}P_{R}+v^{(3)}_{k}P_{R}^{-1}\right)+\sum_{i,k\in U_{k}}\left(w^{(1)}_{k,i}+P_{R}^{-1}w^{(2)}_{k,i}\right)p_{i}}}\label{gammazft2}
\end{align}
\hrule
\end{figure*}
}

Next we derive a lower bound on the achievable rate using the method in \cite{marzetta2006much},\cite{DBLP:journals/twc/JoseAMV11}. For the $k-k^{'}$ pair, the signal received by the $k$th user can be written as (see \eqref{yktilde}) 
\begin{eqnarray}\setcounter{equation}{22}
\tilde{y}_{k} = {\alpha\sqrt{p_{k^{'}}}\mathbb{E}\left[{\mathbf{f}}_{k}^{T}\mathbf{W}{{\mathbf{g}_{k^{'}}}}\right]x_{k^{'}}} + {\tilde{n}_{k}},
\end{eqnarray}
where \iftoggle{SINGLE_COL}{\vspace*{-0.34in}}{}
\iftoggle{SINGLE_COL}{}{
\begin{eqnarray}
\tilde{n}_{k} &=& \alpha \sqrt{p_{k^{'}}}\left(\mathbf{f}_{k}^{T}\mathbf{W}\mathbf{g}_{k^{'}}-\mathbb{E}\left[{\mathbf{f}}_{k}^{T}\mathbf{W}{{\mathbf{g}_{k^{'}}}}\right]\right)x_{k^{'}} + \alpha\sqrt{p_{k}}\lambda_{k}x_{k}\nonumber\\
&&+\alpha \mathbf{f}_{k}^{T}\mathbf{W}\sum\limits_{i\neq k,k^{'}}^{2K}\sqrt{p_{i}}\mathbf{g}_{i}x_{i} +\alpha \mathbf{f}_{k}^{T}\mathbf{W} \mathbf{G_{RR}}\tilde{\mathbf{x}}_{R}\nonumber\\
&& + \alpha \mathbf{f}_{k}^{T}\mathbf{W}\mathbf{z}_{R}+\sum\limits_{i,k\in U{k}}\Omega_{k,i}\sqrt{p(k)}x_{i} + z_{k}.
\end{eqnarray}
}
\iftoggle{BIG_EQUATION}{}{
\begin{eqnarray}
\tilde{n}_{k} &=& \alpha \sqrt{p_{k^{'}}}\left(\mathbf{f}_{k}^{T}\mathbf{W}\mathbf{g}_{k^{'}}-\mathbb{E}\left[{\mathbf{f}}_{k}^{T}\mathbf{W}{{\mathbf{g}_{k^{'}}}}\right]\right)x_{k^{'}} + \alpha\sqrt{p_{k}}\lambda_{k}x_{k}+\alpha \mathbf{f}_{k}^{T}\mathbf{W}\sum\limits_{i\neq k,k^{'}}^{2K}\sqrt{p_{i}}\mathbf{g}_{i}x_{i} \nonumber\\
&& +\alpha \mathbf{f}_{k}^{T}\mathbf{W} \mathbf{G_{RR}}\tilde{\mathbf{x}}_{R} + \alpha \mathbf{f}_{k}^{T}\mathbf{W}\mathbf{z}_{R}+\sum\limits_{i,k\in U{k}}\Omega_{k,i}\sqrt{p(k)}x_{i} + z_{k}.
\end{eqnarray}
}
The value of $\mathbb{E}\left[{\mathbf{f}}_{k}^{T}\mathbf{W}{{\mathbf{g}_{k^{'}}}}\right]$ can be calculated from the knowledge of channel distribution. We observe that the desired signal and effective noise are uncorrelated. According to \cite{DBLP:journals/tit/HassibiH03,DBLP:journals/tit/Medard00}, we only exploit the knowledge of the $\mathbb{E}\left[{\mathbf{f}}_{k}^{T}\mathbf{W}{{\mathbf{g}_{k^{'}}}}\right]$ in the detection, and treat uncorrelated additive noise $\tilde{n}(k)$ as the worst-case Gaussian noise when computing the spectral efficiency. We, consequently obtain lower bound on the achievable rate~as
\begin{eqnarray}
R_{\mbox{\,lower}} = \left(1-\frac{\tau}{T}\right)\sum\limits_{k=1}^{2K}\log_{2}\left(1+\mbox{SNR}_{k,\mbox{lower}}\right),
\end{eqnarray}
where \iftoggle{SINGLE_COL}{}{$\mbox{SNR}_{k,\mbox{lower}}$ is given by (\ref{gammalower}) as shown at the top of this page.}
\iftoggle{BIG_EQUATION}{}{
\begin{align}\label{gammalower}
\mbox{SNR}_{k,\mbox{lower}} = \frac{\alpha^{2}p_{k^{'}}\left|\mathbb{E}\left[{\mathbf{f}}_{k}^{T}\mathbf{W}{{\mathbf{g}_{k^{'}}}}\right]\right|^{2}}{\alpha^{2}p_{k^{'}}\mbox{var}\left[{\mathbf{f}}_{k}^{T}\mathbf{W}{{\mathbf{g}_{k^{'}}}}\right]+\alpha^{2}p_{k}\mbox{SI}_{k}+\alpha^{2}\mbox{IP}_{k}+\alpha^{2}\mbox{NR}_{k}+\alpha^{2}\mbox{LIR}_{k}+\mbox{UI}_{k}+\mbox{NU}_{k}}.
\end{align}
}
In (\ref{gammalower}), the residual self-interference after SIC $(\mbox{SI})$, the inter-pair interference $(\mbox{IP})$, the amplified noise from the relay $(\mbox{NR})$, amplified loop interference $(\mbox{LIR})$, self-loop interference and inter-user interference $(\mbox{UI})$, and the noise at user $(\mbox{NU})$, are given as following.
\iftoggle{SINGLE_COL}{}{
\setcounter{equation}{25}}
\begin{align}
&\mbox{SI}_{k}= \mathbb{E}\left[|{\mathbf{f}}_{k}^{T}\mathbf{W}{{\mathbf{g}_{k}}}-\hat{\mathbf{f}}_{k}^{T}\mathbf{W}{\hat{\mathbf{g}}_{k}}|^{2}\right],\mbox{IP}_{k}=\hspace{-0.1in}\sum\limits_{i\neq k,k^{'}}^{2K}\hspace{-0.05in}p_{i}\mathbb{E}\left[|{\mathbf{f}}_{k}^{T}\mathbf{W}{{\mathbf{g}_{i}}}|^{2}\right],\nonumber\\
&\mbox{NR}_{k}=\mathbb{E}\left[|{\mathbf{f}}_{k}^{T}\mathbf{W}\mathbf{z}_{R}|^{2}\right],\,\,\mbox{LIR}_{k}= \mathbb{E}\left[|{\mathbf{f}}_{k}^{T}\mathbf{W}\mathbf{G}_{RR}\tilde{\mathbf{x}}|^{2}\right],\nonumber\\
&\mbox{UI}_{k}= \sum_{i,k\in U{k}}p_{i}\mathbb{E}\left[|\Omega_{k,i}{x_{i}}|^{2}\right],\,\,\mbox{NU}_{k}= \mathbb{E}\left[|z_{k}|^{2}\right].
\end{align}

\begin{theorem}\label{theorem2}
The spectral efficiency for a finite number of receive antenna at the relay with imperfect CSI based ZFR/ZFT processing is lower bounded as $\left(1-\frac{\tau}{T}\right) \log_{2}\left\{1+ \mbox{SNR}_{k}^{\mbox{zf}}
(p_{k},P_{R})\right\}$, where \iftoggle{SINGLE_COL}{}{$\mbox{SNR}_{k}^{\mbox{zf}}
(p_{k},P_{R})$ is given by} \iftoggle{SINGLE_COL}{}{(\ref{gammazft2}), shown at the top of this page, with
\begin{align}
d^{(1)}_{k,i} &= \frac{1}{(N-2K-1)}\left(\frac{\sigma_{\xi,f,k}^{2}}{\hat\sigma_{f,i^{'}}^{2}}+ \frac{\sigma_{\xi,g,i}^{2}}{\hat\sigma_{g,k^{'}}^{2}}\right)+ \sigma_{\xi,f,k}^{2}\sigma_{\xi,g,i}^{2}\hat{\eta}\nonumber\\
d^{(2)}_{k,i} &= \sigma_{n}^{2}\left(\frac{1}{\left(N-2K-1\right)\hat\sigma_{f,i^{'}}^{2}} +\hat{\eta}\sigma_{\xi,g,i}^{2}\right)\nonumber\\
d^{(3)}_{k,i} &= \sigma_{k,i}^{2}\left(\frac{1}{\left(N-2K-1\right)\hat\sigma_{f,i^{'}}^{2}} +\hat{\eta}\sigma_{\xi,g,i}^{2}\right)\nonumber\\
v^{(1)}_{k} &= \sigma_{nr}^{2}\left(\frac{1}{(N-2K-1)\hat\sigma_{g,k^{'}}^{2}}+ \sigma_{\xi,f,k}^{2}\hat{\eta}\right)+\hat\eta\sigma_{LIR}^{2}\sigma_{n}^{2}\nonumber\\
v^{(2)}_{k} &= \sigma_{LIR}^{2}\left(\frac{1}{(N-2K-1)\hat\sigma_{g,k^{'}}^{2}}+\sigma_{\xi,f,k}^{2}\hat{\eta}\right)\nonumber\\
v^{(3)}_{k} &= \hat\eta\sigma_{nr}^{2}\sigma_{n}^{2},\,w^{(1)}_{k,i} = \hat\eta\sigma_{k,i}^{2}\sigma_{LIR}^{2},\,w^{(2)}_{k,i} = \hat\eta\sigma_{k,i}^{2}\sigma_{nr}^{2},\,u_{k} = 1\nonumber.
\end{align}
\end{theorem}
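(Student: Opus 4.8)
The plan is to substitute the ZFR/ZFT precoder \eqref{wzf} together with the MMSE decomposition $\mathbf{G}=\hat{\mathbf{G}}+\mathbf{E}_g$, $\mathbf{F}=\hat{\mathbf{F}}+\mathbf{E}_f$ into every term of the lower-bound SNR \eqref{gammalower}, to evaluate each expectation in closed form, and finally to insert the scaling factor $\alpha$ from Proposition~\ref{pre2} and collect the result by powers of $P_R$ and by the user powers $p_i$. The workhorse identities are the zero-forcing relations $\hat{\mathbf{f}}_k^T\hat{\bar{\mathbf{F}}}^{*}=\mathbf{e}_k^T$ and $\hat{\bar{\mathbf{G}}}^{H}\hat{\mathbf{g}}_i=\mathbf{e}_i$, where $\mathbf{e}_k$ is the $k$th standard basis vector, together with the fact that $\mathbf{T}$ is the real, symmetric pair-swap permutation, so that $T_{k,k^{'}}=1$ and $T_{k,k}=0$, and the column statistics $\mathbf{e}_{f,k}\sim\mathcal{C}\mathcal{N}(0,\sigma_{\xi,f,k}^{2}\mathbf{I}_N)$, $\mathbf{e}_{g,i}\sim\mathcal{C}\mathcal{N}(0,\sigma_{\xi,g,i}^{2}\mathbf{I}_N)$ of the error columns.

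First I would handle the desired signal. Writing $\mathbf{f}_k^T\mathbf{W}\mathbf{g}_{k^{'}}=(\hat{\mathbf{f}}_k+\mathbf{e}_{f,k})^T\hat{\bar{\mathbf{F}}}^{*}\mathbf{T}\hat{\bar{\mathbf{G}}}^{H}(\hat{\mathbf{g}}_{k^{'}}+\mathbf{e}_{g,k^{'}})$ and using that the errors are zero-mean and independent of the estimates, all cross terms drop in expectation and the estimate--estimate term collapses to $\mathbf{e}_k^T\mathbf{T}\mathbf{e}_{k^{'}}=T_{k,k^{'}}=1$; hence $\mathbb{E}[\mathbf{f}_k^T\mathbf{W}\mathbf{g}_{k^{'}}]=1$, giving $u_k=1$. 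The same collapse shows $\hat{\mathbf{f}}_k^T\mathbf{W}\hat{\mathbf{g}}_k=T_{k,k}=0$, so the residual self-interference reduces to $\mbox{SI}_k=\mathbb{E}[|\mathbf{f}_k^T\mathbf{W}\mathbf{g}_k|^{2}]$. Crucially, because the deterministic coupling is annihilated by $\mathbf{T}$ for every $i\neq k^{'}$ and equals the constant $1$ for $i=k^{'}$, the variance of the desired signal and all the leakage second moments are driven purely by the estimation errors; this is exactly what keeps the whole calculation at the level of first-order inverse-Wishart moments and avoids any genuine second moment of $\hat{\mathbf{\Lambda}}_F$ or $\hat{\mathbf{\Lambda}}_G$.

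Next I would evaluate $\mbox{var}[\mathbf{f}_k^T\mathbf{W}\mathbf{g}_{k^{'}}]$, $\mbox{IP}_k$, $\mbox{SI}_k$, $\mbox{NR}_k$ and $\mbox{LIR}_k$ by a common recipe: expand, average first over the error columns, then over the inverse-Wishart matrices using $\mathbb{E}[\hat{\bar{\mathbf{f}}}_i^{H}\hat{\bar{\mathbf{f}}}_i]=\tfrac{1}{(N-2K-1)\hat\sigma_{f,i}^{2}}$, $\mathbb{E}[\hat{\bar{\mathbf{g}}}_i^{H}\hat{\bar{\mathbf{g}}}_i]=\tfrac{1}{(N-2K-1)\hat\sigma_{g,i}^{2}}$ and the bilinear trace $\mbox{Tr}\{\mathbb{E}[\hat{\mathbf{\Lambda}}_F^{*}\mathbf{T}\hat{\mathbf{\Lambda}}_G\mathbf{T}]\}=\hat\eta$, exactly as in the proof of Proposition~\ref{pre2}. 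For instance $\mathbb{E}[|\mathbf{f}_k^T\mathbf{W}\mathbf{g}_i|^{2}]$ splits into three error-driven pieces evaluating to $\tfrac{\sigma_{\xi,g,i}^{2}}{(N-2K-1)\hat\sigma_{g,k^{'}}^{2}}$, $\tfrac{\sigma_{\xi,f,k}^{2}}{(N-2K-1)\hat\sigma_{f,i^{'}}^{2}}$ and $\sigma_{\xi,f,k}^{2}\sigma_{\xi,g,i}^{2}\hat\eta$, whose sum is precisely $d^{(1)}_{k,i}$; since the signal variance (at $i=k^{'}$), the inter-pair terms (at $i\neq k,k^{'}$) and the self-interference (at $i=k$) all share this form, the $P_R^{0}$, $p_i$-weighted part of the denominator consolidates into $\sum_{i=1}^{2K}d^{(1)}_{k,i}p_i$. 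A short computation giving $\mathbb{E}[\|\mathbf{f}_k^T\mathbf{W}\|^{2}]=\tfrac{1}{(N-2K-1)\hat\sigma_{g,k^{'}}^{2}}+\sigma_{\xi,f,k}^{2}\hat\eta$ then produces $\mbox{NR}_k$ and, via $\mathbb{E}[\mathbf{G}_{RR}\mathbf{G}_{RR}^{H}]=N\sigma_{LIR}^{2}\mathbf{I}_N$ and $\mathbb{E}[\tilde{\mathbf{x}}_R\tilde{\mathbf{x}}_R^{H}]=\tfrac{P_R}{N}\mathbf{I}_N$, the term $\mbox{LIR}_k=P_R\,v^{(2)}_k$; these supply $v^{(1)}_k$ and the $v^{(2)}_kP_R$ contribution.

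Finally I would assemble everything. The terms $\mbox{var}$, $\mbox{SI}$, $\mbox{IP}$, $\mbox{NR}$, $\mbox{LIR}$ carry the factor $\alpha^{2}$ that also multiplies the numerator, so after normalizing the numerator to $p_{k^{'}}$ they pass through directly. The user-side terms $\mbox{UI}_k=\sum_{i,k\in U_k}p_i\sigma_{k,i}^{2}$ and $\mbox{NU}_k=\sigma_n^{2}$ lack the $\alpha^{2}$ factor, so normalizing multiplies them by $\alpha^{-2}=P_R^{-1}\big(\hat\lambda+\hat\eta(\sum_i p_i\sigma_{\xi,g,i}^{2}+\sigma_{nr}^{2}+P_R\sigma_{LIR}^{2})\big)$ from \eqref{alphazfproof}. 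Expanding this product and reindexing $\hat\lambda$ through the pair map $i\leftrightarrow i^{'}$ produces the remaining coefficients: the $P_R$-independent piece $\hat\eta\sigma_{LIR}^{2}\sigma_n^{2}$ from $\mbox{NU}_k$ folds into $v^{(1)}_k$, the $P_R^{-1}$ pieces give $d^{(2)}_{k,i}$, $v^{(3)}_k$ and $w^{(2)}_{k,i}$, the surviving $p_ip_j$ cross terms give the quadratic $d^{(3)}_{k,i}$ contribution, and the $P_R^{0}$ part of $\mbox{UI}_k/\alpha^{2}$ gives $w^{(1)}_{k,i}$. Collecting by powers of $P_R$ and $p_i$ then reproduces \eqref{gammazft2}. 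I expect the main obstacle to be not any single expectation---each is a routine error average followed by a first-moment Wishart average---but the bookkeeping of this last assembly: tracking the $P_R$ powers introduced by $\alpha^{-2}$, carrying out the $i\leftrightarrow i^{'}$ reindexing consistently, and grouping the quadratic-in-power terms so that the denominator lands in the stated compact form.
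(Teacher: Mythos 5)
Your proposal is correct and takes essentially the same route as the paper's own proof: decompose $\mathbf{F},\mathbf{G}$ into MMSE estimates plus independent errors, use the zero-forcing identities to get a unit numerator and to observe that $\hat{\mathbf{f}}_{k}^{T}\mathbf{W}\hat{\mathbf{g}}_{k}=0$ (so SIC is unnecessary), evaluate every variance/leakage term by averaging over the error columns and then applying first-order inverse-Wishart moments together with the bilinear trace $\hat{\eta}$, and finally fold $\alpha^{-2}$ from Proposition~\ref{pre2} into the $\mbox{UI}_{k}$ and $\mbox{NU}_{k}$ terms. The only differences are presentational: you consolidate the variance, self-interference and inter-pair terms into one $d^{(1)}_{k,i}$ computation and spell out the final $P_{R}$/power bookkeeping, which the paper compresses into the single sentence following (\ref{nuzf}).
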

\begin{proof}
Refer to Appendix \ref{gamzf}.
\end{proof}
}
\iftoggle{BIG_EQUATION}{}{
\begin{align}\label{gammazft2}
\iftoggle{BIG_EQUATION}{}{&\hspace{-0.3in}}\iftoggle{SINGLE_COL}{}{\hspace{-0.8in}}\mbox{SNR}_{k}^{\mbox{zf}}
(p_{k},P_{R}) \iftoggle{BIG_EQUATION}{}{\nonumber\\
&\hspace{-0.3in}}=  \frac{u_{k}p_{k^{'}}}{\displaystyle{\sum_{i=1}^{2K}\left(d^{(1)}_{k,i}+d^{(2)}_{k,i}P_{R}^{-1}+\sum_{i,k\in U_{k}}p_{i}P_{R}^{-1}d^{(3)}_{k,i}\right)p_{i}+\left(v^{(1)}_{k}+v^{(2)}_{k}P_{R}+v^{(3)}_{k}P_{R}^{-1}\right)+\sum_{i,k\in U_{k}}\left(w^{(1)}_{k,i}+P_{R}^{-1}w^{(2)}_{k,i}\right)p_{i}}},\nonumber\\
\end{align}
where \iftoggle{SINGLE_COL}{
$u_{k}= 1$, $d^{(1)}_{k,i} = \frac{1}{(N-2K-1)}\left(\frac{\sigma_{\xi,f,k}^{2}}{\hat\sigma_{f,i^{'}}^{2}}+ \frac{\sigma_{\xi,g,i}^{2}}{\hat\sigma_{g,k^{'}}^{2}}\right)+ \sigma_{\xi,f,k}^{2}\sigma_{\xi,g,i}^{2}\hat{\eta}$,
$d^{(2)}_{k,i} = \sigma_{n}^{2}\left(\frac{1}{\left(N-2K-1\right)\hat\sigma_{f,i^{'}}^{2}} +\hat{\eta}\sigma_{\xi,g,i}^{2}\right)$, $d^{(3)}_{k,i} = \sigma_{k,i}^{2}\left(\frac{1}{\left(N-2K-1\right)\hat\sigma_{f,i^{'}}^{2}} +\hat{\eta}\sigma_{\xi,g,i}^{2}\right)$, $v^{(1)}_{k} = \sigma_{nr}^{2}\left(\frac{1}{(N-2K-1)\hat\sigma_{g,k^{'}}^{2}}+ \sigma_{\xi,f,k}^{2}\hat{\eta}\right)+\hat\eta\sigma_{LIR}^{2}\sigma_{n}^{2}$, $v^{(2)}_{k} = \sigma_{LIR}^{2}\left(\frac{1}{(N-2K-1)\hat\sigma_{g,k^{'}}^{2}}+\sigma_{\xi,f,k}^{2}\hat{\eta}\right)$, $v^{(3)}_{k} = \hat\eta\sigma_{nr}^{2}\sigma_{n}^{2}$, $w^{(1)}_{k,i} = \hat\eta\sigma_{k,i}^{2}\sigma_{LIR}^{2}$, $w^{(2)}_{k,i} = \hat\eta\sigma_{k,i}^{2}\sigma_{nr}^{2}$}{}
\iftoggle{SINGLE_COL}{}{where 
\begin{eqnarray}
\hspace{0.5in}u_{k} &=& 1,\,\,d^{(1)}_{k,i} = \frac{1}{(N-2K-1)}\left(\frac{\sigma_{\xi,f,k}^{2}}{\hat\sigma_{f,i^{'}}^{2}}+ \frac{\sigma_{\xi,g,i}^{2}}{\hat\sigma_{g,k^{'}}^{2}}\right)+ \sigma_{\xi,f,k}^{2}\sigma_{\xi,g,i}^{2}\hat{\eta},\nonumber\\
d^{(2)}_{k,i} &=& \sigma_{n}^{2}\left(\frac{1}{\left(N-2K-1\right)\hat\sigma_{f,i^{'}}^{2}} +\hat{\eta}\sigma_{\xi,g,i}^{2}\right),\,\,
d^{(3)}_{k,i} = \sigma_{k,i}^{2}\left(\frac{1}{\left(N-2K-1\right)\hat\sigma_{f,i^{'}}^{2}} +\hat{\eta}\sigma_{\xi,g,i}^{2}\right),\nonumber\\
v^{(1)}_{k} &=& \sigma_{nr}^{2}\left(\frac{1}{(N-2K-1)\hat\sigma_{g,k^{'}}^{2}}+ \sigma_{\xi,f,k}^{2}\hat{\eta}\right)+\hat\eta\sigma_{LIR}^{2}\sigma_{n}^{2},\nonumber\\
v^{(2)}_{k} &=& \sigma_{LIR}^{2}\left(\frac{1}{(N-2K-1)\hat\sigma_{g,k^{'}}^{2}}+\sigma_{\xi,f,k}^{2}\hat{\eta}\right),\nonumber\\
v^{(3)}_{k} &=& \hat\eta\sigma_{nr}^{2}\sigma_{n}^{2},\,\,\,w^{(1)}_{k,i} = \hat\eta\sigma_{k,i}^{2}\sigma_{LIR}^{2},\,\,\,w^{(2)}_{k,i} = \hat\eta\sigma_{k,i}^{2}\sigma_{nr}^{2}\nonumber.
\end{eqnarray}}
\end{theorem}
\begin{proof}
Refer to Appendix \ref{gamzf}.
\end{proof}
}

\section{Simulation Results}
\label{simu_sec_ref}
%
\begin{figure}[!htb]
	\centering
	\includegraphics[scale=0.48]{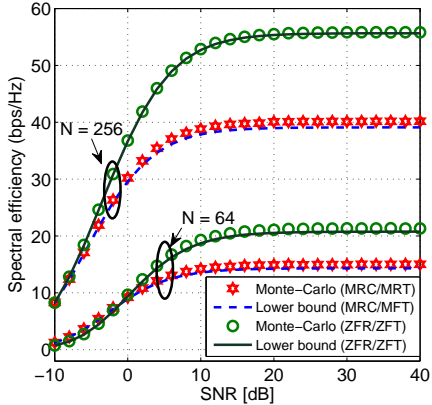}\iftoggle{SINGLE_COL}{}{\vspace*{-8pt}}
	\caption{Spectral efficiency versus $\mbox{SNR}$ for MRC/MRT and ZFR/ZFT, where $\mbox{SNR}_{\rho}=10$~dB.}
	\label{sevssnrmrczf}
\end{figure}

\begin{figure}[!htb]
	\centering
	\includegraphics[scale=0.48]{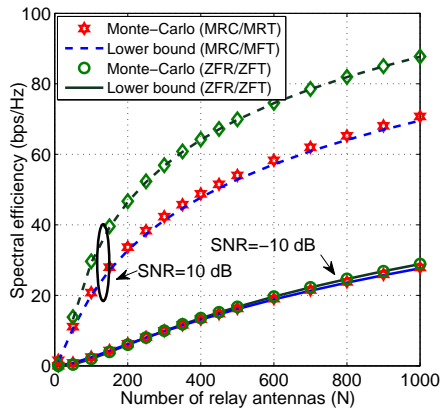}\iftoggle{SINGLE_COL}{}{\vspace*{-8pt}}
	\caption{Spectral efficiency versus the number of relay antennas for MRC/MRT and ZFR/ZFT processing, where $\mbox{SNR}_{\rho}=10$~dB.}
	\label{sevsNmrczf}
\end{figure}

\begin{figure}[!htb]
	\centering
	\includegraphics[scale=0.48]{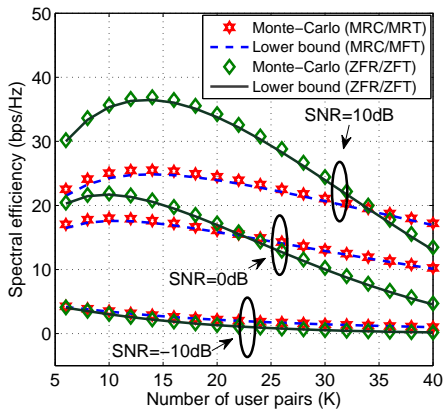}\iftoggle{SINGLE_COL}{}{\vspace*{-8pt}}
	\caption{Spectral efficiency versus number of user pairs for MRC/MRT and ZFR/ZFT processing, where $N = 128,\, \mbox{SNR}_{\rho}=10$~dB.}
	\label{sevsKmrczf}
\end{figure}

\begin{figure}[!htb]
	\centering
	\includegraphics[scale=0.48]{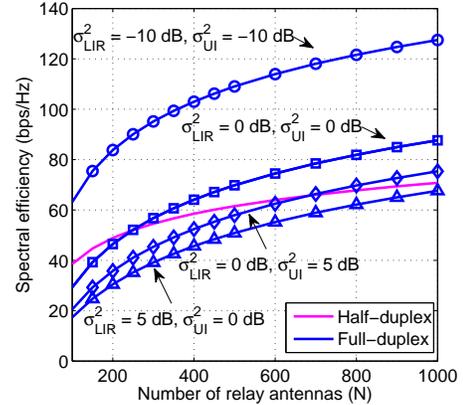}\iftoggle{SINGLE_COL}{}{\vspace*{-8pt}}
	\caption{Spectral efficiency versus number of relay antennas for ZFR/ZFT processing comparing half-duplex and full-duplex systems, where $\mbox{SNR}=10$~dB, $\mbox{SNR}_{\rho}=10$~dB. Here the values of $\sigma_{LIR}^{2}$ and $\sigma_{UI}^{2}$ are with respect to $\sigma^{2}$.}
	\label{sevslirui}
\end{figure}
We investigate the performance of the multi-pair two-way
full duplex AF relay system by using Monte-Carlo simulations. We will validate the lower-bound expression derived for the spectral efficiency in Theorem \ref{theorem2}. In \cite{buddhiraja_ekant_arxiv}, we have derived the lower-bound expression for MRC/MRT processing considering MMSE based channel estimation for the system under consideration. In this paper, we have used the results from \cite{buddhiraja_ekant_arxiv} to compare the performance of ZFR/ZFT with MRC/MRT processing. For this study, we choose, noise variances as $\sigma_{n}^{2}=\sigma_{nr}^{2}=\sigma^{2}$, and the SNR is defined as $\mbox{SNR}=P_{R}/\sigma^{2}$. We define the pilot signal to noise ratio as $\mbox{SNR}_{\rho}=P_{\rho}/\sigma^{2}$, and set the length of the coherence interval $T=200$ symbols, the training length $\tau=2K$. We begin by comparing the analytical lower bound for the spectral efficiency, obtained in Theorem~\ref{theorem2} with their exact expression in \eqref{sumexact} using Monte-Carlo simulations. We compare the bound for $N=64$ and $N=256$ relay antennas and set $K=10$ user pairs, $\sigma^{2}_{g,i}=\sigma^{2}_{f,i}=\sigma^{2}$, for $i=1,2,3,\cdots,2K$, $\sigma_{LIR}^{2}=\sigma^{2}$, $\sigma_{UI}^{2}\triangleq\sigma_{k,j}^{2}=\sigma^{2}$ for $k,j=1,2,3,\cdots,2K$, $\mbox{SNR}_{\rho}=10$~dB and all users are allocated equal power i.e., $p_{i}=P_{R}/2K,\forall i=1,2,3,\cdots,2K$. We see from Fig.~\ref{sevssnrmrczf} that the derived lower bound and exact expression overlap for ZFR/ZFT processing for $N=256$ relay antennas. For MRC/MRT, the lower bound marginally differs from the exact expression. We also observe that the spectral efficiency, for high SNR values, saturates for both MRC/MRT and ZFR/ZFT. This is because the loop interference also increases proportionally with increase in SNR.

Fig.~\ref{sevsNmrczf} compares the spectral efficiency versus $N$ for MRC/MRT and ZFR/ZFT processing with $\mbox{SNR}=10$~dB and $\mbox{SNR}=-10$~dB. The performance of MRC/MRT and ZFR/ZFT processing is almost same for $\mbox{SNR}=-10$~dB. The spectral-efficiency versus $K$ for different value of $\mbox{SNR}$ is shown in Fig.~\ref{sevsKmrczf}. As the number of multi-pairs increases the SNR of each user decreases and hence noise dominates. The ZFR/ZFT neglects the effect of noise which degrades the spectral-efficiency as $K$ increases. In contrast MRC/MRT works well at low SNR as it maximizes the received SNR while neglecting the inter-pair interference.
Fig.~\ref{sevslirui} compares the spectral efficiency versus number of relay antennas for half-duplex and full-duplex system with ZFR/ZFT processing. As we increase the value of  self-loop interference $\sigma_{LIR}^{2}$ and inter-user interference $\sigma_{UI}^{2}$, the spectral efficiency of full-duplex system decreases. For $N<650$, the half-duplex relay with $\sigma_{LIR}^{2}=0$~dB, $\sigma_{UI}^{2}=5$~dB performs better than full-duplex relay. We also observe that with the increase in the the number of relay antennas the rate of increase of spectral efficiency in case of full-duplex relay is higher as compared to half-duplex relay.
\iftoggle{SINGLE_COL}{}{\vspace*{-0.1in}}
\section{Conclusion}
\label{conclude_ref}
We considered a multi-pair AF full-duplex massive MIMO two-way relay with full-duplex users with single transmit and receive antenna. We derived closed-form spectral efficiency expression for ZFR/ZFT relay processing with MMSE channel estimation, and for arbitrary number of relay antennas, which have not yet been derived in the literature. We showed the accuracy of these lower bounds for different number of relay antennas, user pairs and relay transmit power. We also numerically investigated the loop and inter-user interference values for which the full-duplex relay outperforms a half-duplex~relay.
\iftoggle{SINGLE_COL}{}{\vspace*{-0.2in}}
\appendices
\section{}
\label{gamzf}

\iftoggle{SINGLE_COL}{}{
\begin{figure*}
\normalsize
\setcounter{mytempeqncnt}{\value{equation}}
\setcounter{equation}{28}
\begin{align}
&\mbox{var}\left[{\mathbf{f}_{k}}^{T}\mathbf{W}\mathbf{g}_{k^{'}}\right]=\mathbb{E}\left[\left|{\mathbf{f}_{k}}^{T}\mathbf{W}\mathbf{g}_{k^{'}}\right|^{2}\right]-\left|\mathbb{E}\left[{\mathbf{f}_{k}}^{T}\mathbf{W}\mathbf{g}_{k^{'}}\right]\right|^{2}\nonumber\\
&\stackrel{(a)}{=}\sigma_{\xi,f,k}^{2} \mathbb{E}\left[\hat{\mathbf{g}}_{k^{'}}^{H}\mathbf{W}^{H}\mathbf{W}\hat{\mathbf{g}}_{k^{'}}\right]+ \sigma_{\xi,g,k^{'}}^{2}\mathbb{E}\left[\hat{\mathbf{f}}_{k}^{T}\mathbf{W}\mathbf{W}^{H}\hat{\mathbf{f}}_{k}^{*}\right] + \sigma_{\xi,f,k}^{2}\sigma_{\xi,g,k^{'}}^{2}\mbox{Tr}\left\{\mathbb{E}\left[\mathbf{W}\mathbf{W}^{H}\right]\right\}\nonumber\\
&\stackrel{(b)}{=}\sigma_{\xi,f,k}^{2} \mathbb{E}\left[\mathbf{1}_{g,k^{'}}^{T}\mathbf{T}\hat{\mathbf{\Lambda}}_{F}^{*}\mathbf{T}\mathbf{1}_{g,k^{'}}\right]+ \sigma_{\xi,g,k^{'}}^{2} \mathbb{E}\left[\mathbf{1}_{f,k}^{T}\mathbf{T}\hat{\mathbf{\Lambda}}_{G}\mathbf{T}\mathbf{1}_{f,k}\right] + \sigma_{\xi,f,k}^{2}\sigma_{\xi,g,k^{'}}^{2}\mbox{Tr}\left\{\mathbb{E}\left[\hat{\mathbf{\Lambda}}_{F}^{*}\mathbf{T}\hat{\mathbf{\Lambda}}_{G}\mathbf{T}\right]\right\}\nonumber\\
&\stackrel{(c)}{=}\frac{\sigma_{\xi,f,k}^{2}}{(N-2K-1)\hat\sigma_{f,k}^{2}}+ \frac{\sigma_{\xi,g,k^{'}}^{2}}{(N-2K-1)\hat\sigma_{g,k^{'}}^{2}}+ \sigma_{\xi,f,k}^{2}\sigma_{\xi,g,k^{'}}^{2}\hat{\eta}\label{varzf}
\setcounter{equation}{35}
\end{align}
\setcounter{equation}{28}
\hrule
\vspace{-0.25in}
\end{figure*}
}

Starting with the numerator of (\ref{gammalower}),  we have
\begin{align}\setcounter{equation}{27}
\mathbb{E}\left[{\mathbf{f}_{k}}^{T}\mathbf{W}\mathbf{g}_{k^{'}}\right] &= \mathbb{E}\left[\left(\hat{\mathbf{f}}_{k}+\mathbf{e}_{f,k}\right)^{T}\hat{\bar{\mathbf{F}}}^{*}\mathbf{T}\hat{\bar{\mathbf{G}}}^{H}\left(\hat{\mathbf{g}}_{k^{'}}+\mathbf{e}_{g,k^{'}}\right)\right]\nonumber\\
&\stackrel{(a)}{=} \mathbb{E}\left[{\mathbf{1}}_{f,k}^{T}\mathbf{T}{\mathbf{1}}_{g,k^{'}}\right]\stackrel{(b)}{=} \mathbb{E}[1] = 1,
\end{align}
The equality in $(a)$ is obtained by using the following results: $\hat{\mathbf{g}}_{k^{'}}^{H}\hat{\bar{\mathbf{G}}}=\mathbf{1}_{g,k^{'}}^{T}$, $\hat{\bar{\mathbf{G}}}^{H}\hat{\mathbf{g}}_{k^{'}}=\mathbf{1}_{g,k^{'}}$, $\hat{\mathbf{f}}_{k}^{T}\hat{\bar{\mathbf{F}}}^{*}=\mathbf{1}_{f,k}^{T}$, $\hat{\bar{\mathbf{F}}}^{T}\hat{\mathbf{f}}_{k}^{*}=\mathbf{1}_{f,k}$. Equality in $(b)$ is because ${\mathbf{1}}_{f,k}^{T}\mathbf{T}{\mathbf{1}}_{g,k^{'}}=1$. The expression for $\mbox{var}\left[{\mathbf{f}_{k}}^{T}\mathbf{W}\mathbf{g}_{k^{'}}\right]$ is given by \iftoggle{SINGLE_COL}{}{(\ref{varzf}) (solved at the top of this page).}
{The equality in $(a)$ therein is because $\hat{\mathbf{F}}^{T}\mathbf{W}\hat{\mathbf{G}}= \mathbf{T},\,\mbox{i.e.},\, \hat{\mathbf{f}}_{k}^{T}\mathbf{W}\hat{\mathbf{g}}_{j}= 1,\forall j = k^{'}\, \mbox{and}\, 0, \, \mbox{otherwise}$. Equalities in $(b)$ are obtained by substituting the value of $\mathbf{W}$ from \eqref{wzf} and by simple manipulations. The equality in $(c)$ is because $\mathbb{E}\left[\hat{w}_{f,k,k}\right] =\frac{1}{(N-2K-1)\hat\sigma_{f,k}^{2}}$, $\mathbb{E}\left[\hat{w}_{g,k^{'},k^{'}}\right] =\frac{1}{(N-2K-1)\hat\sigma_{g,k^{'}}^{2}}$, and $\hat{\eta}\triangleq\sum_{j=1}^{2K}\frac{1}{\left(N-2K-1\right)^{2}\hat\sigma_{f,j}^{2}\hat\sigma_{g,	j^{'}}^{2}}
$.
Remember that there is no need to perform SIC in the case ZFR/ZFT processing, and hence the self-interference term can be re-written as
\begin{align} \setcounter{equation}{29}
&\mbox{SI}_{k} = \frac{1}{(N-2K-1)}\hspace{-0.03in}\left(\hspace{-0.03in}\frac{\sigma_{\xi,f,k}^{2}}{\hat\sigma_{f,k^{'}}^{2}}\hspace{-0.03in}+\hspace{-0.03in} \frac{\sigma_{\xi,g,k}^{2}}{\hat\sigma_{g,k^{'}}^{2}}\hspace{-0.03in}\right)\hspace{-0.03in}+\hspace{-0.03in} \sigma_{\xi,f,k}^{2}\sigma_{\xi,g,k}^{2}\hat{\eta}.\label{sizf}
\end{align}
Similarly, the other terms in the denominator of (\ref{gammalower}) can be written as follows
\begin{align}
&\hspace{-0.08in}\mbox{IP}_{k}\hspace{-0.03in}=\hspace{-0.1in} \sum_{i\ne k,k^{'}}^{2K}\hspace{-0.08in}p_{i}\hspace{-0.03in}\left[\hspace{-0.03in}\frac{1}{N\hspace{-0.03in}-\hspace{-0.03in}2K\hspace{-0.03in}-\hspace{-0.03in}1}\hspace{-0.03in}\left(\hspace{-0.03in}\frac{\sigma_{\xi,f,k}^{2}}{\hat\sigma_{f,i^{'}}^{2}}\hspace{-0.03in}+\hspace{-0.03in} \frac{\sigma_{\xi,g,i}^{2}}{\hat\sigma_{g,k^{'}}^{2}}\hspace{-0.03in}\right)\hspace{-0.03in}+\hspace{-0.03in} \sigma_{\xi,f,k}^{2}\sigma_{\xi,g,i}^{2}\hat{\eta}\right]\hspace{-0.05in},\label{ipzf}\\
&\mbox{NR}_{k}=  \sigma_{nr}^{2}\left(\frac{1}{(N-2K-1)\hat\sigma_{g,k^{'}}^{2}}+ \sigma_{\xi,f,k}^{2}\hat{\eta}\right)\iftoggle{SINGLE_COL}{}{,\label{NR}\\
&}\iftoggle{BIG_EQUATION}{}{,\quad}\mbox{LIR}_{k}= P_{R}\sigma_{LIR}^{2}\left(\frac{1}{(N-2K-1)\hat\sigma_{g,k^{'}}^{2}}+ \sigma_{\xi,f,k}^{2}\hat{\eta}\right),\label{lirzf}\\
&\mbox{UI}_{k}= \sum_{i,k\in U{k}}p_{i}\sigma_{k,i}^{2},\,
\mbox{NU}_{k}= \sigma_{n}^{2}.\label{nuzf}
\end{align}
Substituting the values obtained from (\ref{varzf}-\ref{nuzf}) in  (\ref{gammalower}), we obtain~\eqref{gammazft2}.
\vspace{-0.1in}
\bibliographystyle{IEEEtran}
\bibliography{IEEEabrv,Relay_ref}

\begin{thebibliography}{10}
\providecommand{\url}[1]{#1}
\csname url@samestyle\endcsname
\providecommand{\newblock}{\relax}
\providecommand{\bibinfo}[2]{#2}
\providecommand{\BIBentrySTDinterwordspacing}{\spaceskip=0pt\relax}
\providecommand{\BIBentryALTinterwordstretchfactor}{4}
\providecommand{\BIBentryALTinterwordspacing}{\spaceskip=\fontdimen2\font plus
\BIBentryALTinterwordstretchfactor\fontdimen3\font minus
  \fontdimen4\font\relax}
\providecommand{\BIBforeignlanguage}[2]{{%
\expandafter\ifx\csname l@#1\endcsname\relax
\typeout{** WARNING: IEEEtran.bst: No hyphenation pattern has been}%
\typeout{** loaded for the language `#1'. Using the pattern for}%
\typeout{** the default language instead.}%
\else
\language=\csname l@#1\endcsname
\fi
#2}}
\providecommand{\BIBdecl}{\relax}
\BIBdecl

\bibitem{DBLP:journals/corr/abs-1303-2817}
L.~Sanguinetti, A.~A. D'Amico, and Y.~Rong, ``A tutorial on the optimization of
  amplify-and-forward {MIMO} relay systems,'' \emph{{IEEE} J. Sel. Areas
  Commun.}, vol.~30, no.~8, pp. 1331--1346, 2012.

\bibitem{DBLP:conf/iswcs/ZhangTH13a}
K.~J. Lee, H.~Sung, E.~Park, and I.~Lee, ``Joint optimization for one and
  two-way {MIMO} {AF} multiple-relay systems,'' \emph{{IEEE} Trans. Wireless
  Commun.}, vol.~9, no.~12, pp. 3671--3681, December 2010.

\bibitem{DBLP:journals/corr/NadhSSAG16}
A.~Nadh, J.~Samuel, A.~Sharma, S.~Aniruddhan, and R.~K. Ganti, ``A taylor
  series approximation of self-interference channel in full-duplex radios,''
  \emph{{IEEE} Trans. Wireless Commun.}, vol.~16, no.~7, pp. 4304--4316, 2017.

\bibitem{fd_tut_ref_ashu}
A.~Sabharwal, P.~Schniter, D.~Guo, D.~Bliss, S.~Rangarajan, and R.~Wichman,
  ``In-band full-duplex wireless: Challenges and opportunities,'' \emph{{IEEE}
  J. Sel. Areas Commun.}, vol.~32, no.~9, pp. 1637--1652, Sep. 2014.

\bibitem{fd_relay_si_can_young}
Y.~Y. Kang, B.-J. Kwak, and J.~H. Cho, ``An optimal full-duplex {AF} relay for
  joint analog and digital domain self-interference cancellation,''
  \emph{{IEEE} Trans. Commun.}, vol.~62, no.~8, pp. 2758--2772, Aug. 2014.

\bibitem{DBLP:journals/twc/ZhangMDXK16}
Z.~Zhang, Z.~Ma, Z.~Ding, M.~Xiao, and G.~K. Karagiannidis, ``Full-duplex
  two-way and one-way relaying: Average rate, outage probability, and
  tradeoffs,'' \emph{{IEEE} Trans. Wireless Commun.}, vol.~15, no.~6, pp.
  3920--3933, 2016.

\bibitem{DBLP:journals/jsac/NgoSML14}
H.~Q. Ngo, H.~A. Suraweera, M.~Matthaiou, and E.~G. Larsson, ``Multipair
  full-duplex relaying with massive arrays and linear processing,''
  \emph{{IEEE} J. Sel. Areas Commun.}, vol.~32, no.~9, pp. 1721--1737, 2014.

\bibitem{DBLP:journals/jsac/ZhangCSX16}
Z.~Zhang, Z.~Chen, M.~Shen, and B.~Xia, ``Spectral and energy efficiency of
  multipair two-way full-duplex relay systems with massive {MIMO},''
  \emph{{IEEE} J. Sel. Areas Commun.}, vol.~34, no.~4, pp. 848--863, 2016.

\bibitem{Zhang2016Chen}
Z.~Zhang, Z.~Chen, M.~Shen, B.~Xia, W.~Xie, and Y.~Zhao, ``Performance analysis
  for training-based multi-pair two-way full-duplex relaying with massive
  antennas,'' \emph{{IEEE} Trans. Veh. Technol.}, 2017, available in IEEExplore
  under Early Access with DOI: 10.1109/TVT.2016.2644986.

\bibitem{DBLP:journals/twc/Marzetta10}
T.~L. Marzetta, ``Noncooperative cellular wireless with unlimited numbers of
  base station antennas,'' \emph{{IEEE} Trans. Wireless Commun.}, vol.~9,
  no.~11, pp. 3590--3600, 2010.

\bibitem{DBLP:journals/tcom/NgoLM13}
H.~Q. Ngo, E.~G. Larsson, and T.~L. Marzetta, ``Energy and spectral efficiency
  of very large multiuser {MIMO} systems,'' \emph{{IEEE} Trans. Commun.},
  vol.~61, no.~4, pp. 1436--1449, 2013.

\bibitem{DBLP:journals/twc/DaiD16}
Y.~Dai and X.~Dong, ``Power allocation for multi-pair massive {MIMO} two-way
  {AF} relaying with linear processing,'' \emph{{IEEE} Trans. Wireless
  Commun.}, vol.~15, no.~9, pp. 5932--5946, 2016.

\bibitem{mm_relay_hong}
H.~Cui, L.~Song, and B.~Jiao, ``Multi-pair two-way amplify-and-forward relaying
  with very large number of relay antennas,'' \emph{{IEEE} Trans. Wireless
  Commun.}, vol.~13, no.~5, pp. 2636--2645, May 2014.

\bibitem{riihonen2009spatial}
T.~Riihonen, S.~Werner, and R.~Wichman, ``Spatial loop interference suppression
  in full-duplex {MIMO} relays,'' in \emph{Conference Record of the Forty-Third
  Asilomar Conference on Signals, Systems and Computers,}.\hskip 1em plus 0.5em
  minus 0.4em\relax IEEE, 2009, pp. 1508--1512.

\bibitem{Biguesh}
M.~Biguesh and A.~B. Gershman, ``Training-based {MIMO} channel estimation: a
  study of estimator tradeoffs and optimal training signals,'' \emph{{IEEE}
  Trans. Signal Process.}, vol.~54, no.~3, pp. 884--893, March 2006.

\bibitem{kay1993fundamentals}
S.~M. Kay, ``Fundamentals of statistical signal processing, volume {I}:
  estimation theory,'' 1993.

\bibitem{graczyk2003complex}
P.~Graczyk, G.~Letac, and H.~Massam, ``The complex wishart distribution and the
  symmetric group,'' \emph{Annals of Statistics}, pp. 287--309, 2003.

\bibitem{marzetta2006much}
T.~L. Marzetta, ``How much training is required for multiuser {MIMO}?'' in
  \emph{ACSSC'06 Fortieth Asilomar Conference on Signals, Systems and
  Computers}.\hskip 1em plus 0.5em minus 0.4em\relax IEEE, 2006, pp. 359--363.

\bibitem{DBLP:journals/twc/JoseAMV11}
J.~Jose, A.~E. Ashikhmin, T.~L. Marzetta, and S.~Vishwanath, ``Pilot
  contamination and precoding in multi-cell {TDD} systems,'' \emph{{IEEE}
  Trans. Wireless Commun.}, vol.~10, no.~8, pp. 2640--2651, 2011.

\bibitem{DBLP:journals/tit/HassibiH03}
B.~Hassibi and B.~M. Hochwald, ``How much training is needed in
  multiple-antenna wireless links?'' \emph{{IEEE} Trans. Inf. Theory}, vol.~49,
  no.~4, pp. 951--963, 2003.

\bibitem{DBLP:journals/tit/Medard00}
M.~M{\'{e}}dard, ``The effect upon channel capacity in wireless communications
  of perfect and imperfect knowledge of the channel,'' \emph{{IEEE} Trans. Inf.
  Theory}, vol.~46, no.~3, pp. 933--946, 2000.

\bibitem{buddhiraja_ekant_arxiv}
E.~Sharma, R.~Budhiraja, and K.~Vasudevan, ``Energy-efficient multi-pair
  two-way {AF} full-duplex massive {MIMO} relaying,'' \emph{CoRR}, vol.
  abs/1705.09043.

\end{thebibliography}

\end{document}